\newcommand{\commentout}[1]{}
\newcommand{\nwc}{\newcommand}
\nwc{\ba}{\begin{array}}
\nwc{\bal}{\begin{align}}
\nwc{\bea}{\begin{eqnarray}}
\nwc{\beq}{\begin{eqnarray}}
\nwc{\bean}{\begin{eqnarray*}}
\nwc{\beqn}{\begin{eqnarray*}}
\nwc{\beqast}{\begin{eqnarray*}}
\nwc{\ea}{\end{array}}
\nwc{\eal}{\end{align}}
\nwc{\eea}{\end{eqnarray}}
\nwc{\eeq}{\end{eqnarray}}
\nwc{\eean}{\end{eqnarray*}}
\nwc{\eeqn}{\end{eqnarray*}}
\nwc{\eeqast}{\end{eqnarray*}}
\nwc{\ep}{\varepsilon}
\nwc{\ept}{\epsilon}
\nwc{\nwt}{\newtheorem}
\nwc{\nn}{\nonumber}
\newcommand{\NN}{\mathbb{N}}
\newcommand{\CC}{\mathbb{C}}
\newcommand{\RR}{\mathbb{R}}
\newcommand{\ZZ}{\mathbb{Z}}
\newcommand{\TT}{\mathbb{T}}
\newcommand{\om}{\omega}
\newcommand{\rank}{\text{Rank\,}}
\newcommand{\supp}{\mathcal{S}}
\nwc{\calP}{\mathcal{P}}
\nwc{\calL}{\mathcal{L}}
\nwc{\xmax}{x_{\text{max}}}
\nwc{\xmin}{x_{\text{min}}}
\nwc{\yep}{y^\ep}
\nwc{\range}{{\rm Range}}
\nwc{\calE}{\mathcal{E}}
\nwc{\calN}{\mathcal{N}}
\nwc{\smax}{\sigma_{\rm max}}
\nwc{\smin}{\sigma_{\rm min}}
\nwc{\re}{{\rm Re}}
\nwc{\im}{{\rm Im}}
\nwc{\bv}{{\rm v}}
\nwc{\bw}{{\rm w}}
\nwc{\PO}{\calP_1}
\nwc{\PT}{\calP_2}
\nwc{\cP}{\mathcal{P}}
\nwc{\PEO}{\calP^\ep_1}
\nwc{\PET}{\calP^\ep_2}
\nwc{\JE}{J^\ep}
\nwc{\RE}{R^\ep}
\nwc{\bc}{{\rm c}}
\nwc{\si}{\sigma}
\nwc{\se}{\sigma^\ep}
\nwc{\UE}{U^\ep}
\nwc{\VE}{V^\ep}
\nwc{\SE}{\Sigma^\ep}
\nwc{\Phit}{ \Phi^{-\frac L 2\rightarrow \frac L 2}}
\nwc{\lan}{\langle}
\nwc{\ran}{\rangle}
\nwc{\HH}{\mathcal{H}}
\nwc{\HHE}{\mathcal{H}^\ep}
\nwc{\EE}{\mathcal{E}}
\nwc{\SI}{\Sigma}
\nwc{\QE}{Q^\ep}
\nwc{\cS}{\mathcal{S}}
\nwc{\XP}{X(\Phi^{M-L})^T}
\nwc{\plo}{\phi^L(\om)}
\title{Compressive  Spectral Estimation with Single-Snapshot ESPRIT: Stability and Resolution}
\author{Albert Fannjiang
\thanks{Department of Mathematics, University of California, Davis, CA. Email:  fannjiang@math.ucdavis.edu. }
}
\begin{document}

\maketitle

\begin{abstract}
In this paper Estimation of Signal Parameters
via Rotational Invariance Techniques (ESPRIT) is developed for  spectral estimation with single-snapshot measurement.  Stability and resolution analysis with performance guarantee for Single-Snapshot ESPRIT (SS-ESPRIT) is the main focus. 

In the noise-free case {\em exact} reconstruction is guaranteed for any arbitrary set of  frequencies as long as the number of measurement data is at least twice the number of
distinct frequencies to be recovered. In the presence of noise and under the assumption that the true frequencies are separated by at least two times Rayleigh's Resolution  Length, an
explicit error bound for frequency reconstruction is given in terms of the dynamic range and the separation of
the frequencies. The separation and sparsity constraint compares favorably with those
of 
the leading approaches to compressed sensing in the continuum. 
\end{abstract}

{\bf Keywords:} ESPRIT, spectral estimation, stability, resolution, compressed sensing

\section{Introduction}

Suppose a signal $y(t)$ consists of linear combinations of $s$ Fourier components from the set 
$$\{e^{-2\pi i \om_j t} : \om_j \in \RR, \ j = 1,\ldots,s\}.$$
Suppose the external noise $\ep(t)$ is present in the received signal
\beq
\label{model}
y^\ep(t) = y(t) + \ep(t), \quad y(t) = \sum_{j=1}^s x_j e^{-2\pi i \om_j t}. 
\eeq

The problem of spectral estimation is to  recover the frequency support set $\supp =	\{\om_1, . . . , \om_s\}$ 	and	the	corresponding	amplitudes	$x	=	[x_1, . . . , x_s]^T$	from	a	finite data sampled at, say, $t=0,1,2, \cdots, M\in \NN$. Because of the nonlinear dependence of  the signal $y(t)$ 
on frequency, the main difficulty  of spectral estimation lies in identifying $\supp$. The amplitudes $x$ can be easily recovered by solving least squares once $\supp$ is known.

Denote (with a slight abuse of notation)  $y = [y_k]_{k=0}^{M},$ $\ep = [\ep_k]_{k=0}^{M}$
and $y^\ep = y + \ep \in \CC^{M+1}$,  with $y_k = y(k)$, $\yep_k = \yep(k)$ and $\ep_k = \ep(k)$.
Let 
\beq
\label{imagingvector}
\phi^{M}(\om) = [1  \ e^{-2\pi i \om} \ e^{-2\pi i 2\om} \ \ldots \ \ e^{-2\pi i M\om}]^T \in \CC^{M+1}
\eeq 
be  the imaging vector of size $M+1$ at the frequency $\om$ and
define 
$$\Phi^{M}= [\phi^{M}(\om_1)\  \phi^{M}(\om_2) \ \ldots \ \phi^{M}(\om_s)] \in \CC^{(M+1)\times s }.$$ 
The single-snapshot  formulation of spectral estimation takes the form 
\beq
y^\ep = \Phi^{M}  x + \ep. 
\label{linearsystem}
\eeq
In addition to  the nonlinear dependence of $\Phi^M$ on the unknown frequencies,  with the sampling times $t=0,1,2, \cdots, M\in \NN$, one can only hope to determine frequencies
on  the torus $\TT=[0,1)$ with the natural metric 
\[
d(\om_j,\om_l) = \min_{n\in \ZZ} |\om_j+n-\om_l|.
\]

A key unit of frequency separation  is Rayleigh's Resolution  Length (RL),  the distance between the center and the first zero of the sinc function ${\sin{(\pi \om M)}}/{(\pi \om)},$
namely, 1 RL $= 1/M$. 


\subsection{Single-Snapshot ESPRIT (SS-ESPRIT)}
In this paper, to circumvent the gridding problem, we  reformulate the spectral estimation problem (\ref{linearsystem})  in the form of {\em multiple measurement vectors} suitable for the application of Estimation of Signal Parameters
via Rotational Invariance Techniques (ESPRIT)
 \cite{PRK,RK89}. 
\commentout{The MUSIC algorithm was introduced by Schmidt \cite{Sch,SchD} and many extensions  exist including S-MUSIC\cite{SMUSIC}, IES-MUSIC\cite{IESMUSIC}, R-MUSIC\cite{RMUSIC} and RAP-MUSIC\cite{RAPMUSIC}. According to Wikipedia, in a detailed evaluation based on thousands of simulations, M.I.T.'s Lincoln Laboratory concluded that, among currently accepted high-resolution algorithms, MUSIC was the most promising and a leading candidate for further study and actual hardware implementation.} 

Most state-of-the-art spectral estimation methods (\cite{PT13,BRD08, SAS} and references therein)
 assume many snapshots of array measurement as well as 
 statistical assumptions on  measurement noise. Below
 we present a stability and resolution analysis for a {\em deterministic, single-snapshot} formulation of ESPRIT.

Fixing a positive integer $1\leq L < M$, we form the Hankel matrix 
\beq
\label{hankel}
H = {\rm Hankel}(y) =
\begin{bmatrix}
y_0 & y_1 & \ldots & y_{M-L}\\
y_1 & y_2 & \ldots & y_{M-L+1}\\
\vdots & \vdots & \vdots & \vdots\\
y_{L} & y_{L+1} & \ldots & y_{M}\\
\end{bmatrix}.
\eeq
 
It is straightforward to verify that  ${\rm Hankel}(y)$ with $y=\Phi^M x$  admits  the Vandermonde decomposition  
\beq
\label{van}
H = \Phi^L X (\Phi^{M-L})^T, \quad X = {\rm diag}(x_1,\ldots,x_s)
\eeq
with the Vandermonde matrix
$$\Phi^L = \begin{bmatrix}
1 & 1 & \ldots & 1 \\
e^{-2\pi i\om_1} & e^{-2\pi i\om_2}& \ldots & e^{-2\pi i\om_s}\\
(e^{-2\pi i\om_1})^2 & (e^{-2\pi i\om_2})^2& \ldots & (e^{-2\pi i\om_s})^2\\
\vdots &\vdots& \vdots & \vdots\\
(e^{-2\pi i\om_1})^{L} & (e^{-2\pi i\om_2})^{L}& \ldots & (e^{-2\pi i\om_s})^{L}\\
\end{bmatrix}.$$

Let $H_1$ and $H_2$ be two sub-matrices  of $H$ consisting, respectively,  of the first 
and last $L$ rows of $H$. Clearly we have as before \beq
 \label{H_1} 
H_1 &= &\Phi^{L-1} X (\Phi^{M-L})^T,\label{H1} \\
H_2 &=& \Phi^{L-1} \Lambda X(\Phi^{M-L})^T,\quad \Lambda = \text{diag}(e^{-2 \pi i\omega_1}, \dots, e^{-2 \pi i \omega_s}) 
\eeq
which can be rewritten as
\beq
\label{10}
H_1=\Phi^{L-1}Y,\quad H_2=\Phi^{L-1}\Lambda Y, \\
 Y\equiv X (\Phi^{M-L})^T\in \CC^{s\times (M-L+1)}. \label{11}
\eeq
Since $Y$ has full (row) rank, $YY^\dagger=I$ where $Y^\dagger$ denotes the pseudo-inverse of $Y$. Hence from (\ref{10}) we have
\beq
H_2=H_1\Psi \label{12}
\eeq
with $\Psi= Y^\dagger \Lambda Y$
implying  that $\{e^{-i2 \pi \omega_1}, \dots, e^{-i2 \pi \omega_s}\}$  is  the set of nonzero eigenvalues of the unknown $(M-L+1)\times (M-L+1)$ matrix $\Psi$.

\begin{theorem}
\label{thm0} For  the Hankel matrices $H_1, H_2$ given above, 
\[
\Psi=H_1^\dagger H_2
\]
is a rank-$s$ solution to
eq. (\ref{12}).
\end{theorem}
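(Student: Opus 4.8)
The plan is to establish the two assertions hidden in the statement — that $\Psi:=H_1^\dagger H_2$ actually satisfies $H_1\Psi=H_2$, and that $\operatorname{rank}\Psi=s$ — and to do so I would reduce everything to elementary facts about the ranges and ranks of $H_1$ and $H_2$ read off from the Vandermonde decomposition (\ref{10}).

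First I would record those rank facts, which is really the crux. Since the frequencies $\om_1,\dots,\om_s$ are distinct and (under the standing data-size hypothesis) $L\ge s$ and $M-L+1\ge s$, the Vandermonde matrix $\Phi^{L-1}\in\CC^{L\times s}$ has full column rank $s$, while $Y\in\CC^{s\times(M-L+1)}$ has full row rank $s$ (as already noted in the text, using also that $X$ is invertible because every $x_j\ne0$). Hence $H_1=\Phi^{L-1}Y$ is a product of a full-column-rank matrix with a surjective one, so $\operatorname{rank}H_1=s$ and $\range(H_1)=\range(\Phi^{L-1})$; the same reasoning, noting that $\Lambda Y$ is still full row rank because $\Lambda$ is an invertible diagonal matrix, gives $\operatorname{rank}H_2=s$ and $\range(H_2)=\range(\Phi^{L-1})$. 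In particular $H_1$ and $H_2$ share the same $s$-dimensional column space,
\[
\range(H_2)=\range(H_1)=\range(\Phi^{L-1}).
\]

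Next, the key identity. Because $H_1H_1^\dagger$ is the orthogonal projection onto $\range(H_1)$ and $\range(H_2)\subseteq\range(H_1)$, it follows that $H_1H_1^\dagger H_2=H_2$; that is, $\Psi=H_1^\dagger H_2$ solves (\ref{12}). (Alternatively one can observe that $H_1^\dagger H_2=H_1^\dagger H_1\Psi$ with $\Psi=Y^\dagger\Lambda Y$ the solution already exhibited, and that $H_1^\dagger H_1$, being the projection onto $\range(H_1^*)=\range(Y^*)$, fixes $Y^\dagger\Lambda Y$ since its columns lie in $\range(Y^*)$; this even shows the two solutions coincide.)

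Finally, the rank count is immediate: on one side $\operatorname{rank}\Psi=\operatorname{rank}(H_1^\dagger H_2)\le\min\{\operatorname{rank}H_1^\dagger,\operatorname{rank}H_2\}=s$, and on the other side $H_1\Psi=H_2$ forces $\operatorname{rank}\Psi\ge\operatorname{rank}H_2=s$, so $\operatorname{rank}\Psi=s$. I expect the only delicate point to be the first step — checking that both Vandermonde factors are full rank, so that $H_1$ and $H_2$ genuinely have the same range — after which the pseudo-inverse manipulations and the rank inequalities are routine.
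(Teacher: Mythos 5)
Your proposal is correct and follows essentially the same route as the paper: both arguments reduce the solvability of $H_2=H_1\Psi$ to the fact that $H_1H_1^\dagger$ is the orthogonal projection onto $\range(H_1)$ together with the range identity $\range(H_1)=\range(H_2)=\range(\Phi^{L-1})$, which in turn rests on the full rank of the Vandermonde factors for distinct frequencies. Your write-up is in fact slightly more complete, since you explicitly verify the rank-$s$ claim for $\Psi=H_1^\dagger H_2$ (via $\rank\Psi\le s$ from the product bound and $\rank\Psi\ge\rank H_2=s$ from $H_1\Psi=H_2$), a point the paper's proof leaves implicit.
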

\begin{proof} Since $H_1H_1^\dagger$ is the identity map on the range of $H_1$, it suffices to prove 
$\range (H_1)=\range (H_2)=\range (\Phi^{L-1})$ which 
would follow from $\rank(\Phi^{M-L}) = s$. 

On the other hand, we have
$\rank(\Phi^{L-1}) = s$ if $L \ge s$ and $\om_k \neq \om_l, \ \forall k \neq l$.
This is because  $s\times s$ square submatrix $\Phi_s$ of $\Phi^L$ is a square Vandermonde matrix whose
determinant is given by
\[
\det(\Psi_s)=\prod_{1\leq i<j\leq s}(e^{-i2\pi \om_j} -e^{-i2\pi \om_i}).
\]
Clearly, $\Phi_s$  is invertible   if and only if $\om_i\neq \om_j, i\neq j$. 
Hence $\rank (\Phi_s)=s$ which implies $\rank(\Phi^{L-1}) = s$. 

\end{proof}

SS-ESPRIT is based on the following observation.
\begin{theorem} For the Hankel matrices $H_1$ and $H_2$ given above, let $\Psi$ be
any rank-$s$ solution to $H_2=H_1\Psi.$
\commentout{ is unique and given by
\beq
\label{18}
\Psi=\Big(X (\Phi^{M-L})^T\Big)^\dagger \Lambda X (\Phi^{M-L})^T. 
\eeq 
}
Then 
$\{e^{-i2 \pi \omega_1}, \dots, e^{-i2 \pi \omega_s}\}$ is the set of nonzero eigenvalues of  $\Psi$. 
\label{thm1}
\end{theorem}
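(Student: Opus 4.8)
The plan is to reduce an arbitrary rank-$s$ solution of $H_2 = H_1\Psi$ to the explicit solution $\Psi_\star := Y^\dagger \Lambda Y$ exhibited just before Theorem~\ref{thm0}, and then to read the spectrum off a block-triangular structure. First I would record what $\Psi_\star$ looks like. Using $YY^\dagger = I_s$ one checks $H_1\Psi_\star = \Phi^{L-1}YY^\dagger\Lambda Y = \Phi^{L-1}\Lambda Y = H_2$, so $\Psi_\star$ solves (\ref{12}), and $\rank\Psi_\star = s$ because $Y^\dagger$ has full column rank, $\Lambda$ is invertible, and $Y$ has full row rank. Since $\Phi^{L-1}$ has trivial kernel (it has rank $s$, as in the proof of Theorem~\ref{thm0}), $\ker H_1 = \ker Y =: K$; likewise $\ker\Psi_\star = \ker Y = K$, while $\range\Psi_\star \subseteq \range Y^\dagger = K^\perp =: R$, with equality by dimension count. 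Hence $\CC^{M-L+1} = R\oplus K$, the operator $\Psi_\star$ annihilates $K$ and maps $R$ bijectively onto $R$, and on $R$ it is conjugate to $\Lambda$ (via $u\mapsto Y^\dagger u$, since $\Psi_\star Y^\dagger u = Y^\dagger\Lambda YY^\dagger u = Y^\dagger \Lambda u$); in particular the nonzero eigenvalues of $\Psi_\star$ are exactly $e^{-2\pi i\omega_1},\dots,e^{-2\pi i\omega_s}$.

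Now let $\Psi$ be any rank-$s$ solution of $H_2 = H_1\Psi$. Then $H_1(\Psi - \Psi_\star) = 0$, so every column of $N := \Psi - \Psi_\star$ lies in $\ker H_1 = K$, i.e.\ $\range N \subseteq K$. Writing operators in block form with respect to $\CC^{M-L+1} = R\oplus K$, we get $\Psi_\star = \left(\begin{smallmatrix} A & 0\\ 0 & 0\end{smallmatrix}\right)$ with $A\in\CC^{s\times s}$ invertible and $\mathrm{spec}(A) = \{e^{-2\pi i\omega_j}\}_{j=1}^s$, and $N = \left(\begin{smallmatrix} 0 & 0\\ B & C\end{smallmatrix}\right)$, so $\Psi = \left(\begin{smallmatrix} A & 0\\ B & C\end{smallmatrix}\right)$ is block lower triangular and $\det(\lambda I - \Psi) = \det(\lambda I_s - A)\det(\lambda I_d - C)$ with $d := M-L+1-s$. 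Left-multiplying by the invertible matrix $\left(\begin{smallmatrix} I & 0\\ -BA^{-1} & I\end{smallmatrix}\right)$ shows $\rank\Psi = \rank A + \rank C = s + \rank C$, so the hypothesis $\rank\Psi = s$ forces $C = 0$; hence $\det(\lambda I - \Psi) = \lambda^d\det(\lambda I_s - A)$, and the nonzero eigenvalues of $\Psi$ are precisely the eigenvalues of $A$, namely $\{e^{-2\pi i\omega_1},\dots,e^{-2\pi i\omega_s}\}$.

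The conceptual content is entirely in the last paragraph: the rank constraint is exactly what annihilates the $K\to K$ block $C$ --- equivalently, it forbids the ``spurious'' perturbation $N$ from acting nontrivially on $\ker H_1$ --- while the off-diagonal block $B$ never affects the spectrum. Everything else is routine once the splitting $\CC^{M-L+1} = R\oplus K$ is in place: that $\Psi_\star$ is a rank-$s$ solution, that $\ker H_1 = \ker Y$ because $\Phi^{L-1}$ is injective, the conjugacy of $\Psi_\star$ restricted to $R$ to $\Lambda$, and the block-triangular determinant identity. I would expect no real obstacle beyond checking that the degenerate case $d = 0$ (i.e.\ $M-L+1 = s$, where $\Psi = \Psi_\star$ is the unique solution) is covered, which it is.
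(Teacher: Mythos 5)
Your proof is correct, but it takes a genuinely different route from the paper's. The paper argues directly from the factorizations $H_1=\Phi^{L-1}Y$ and $H_2=\Phi^{L-1}\Lambda Y$: since $\Phi^{L-1}$ has full column rank, $H_2=H_1\Psi$ forces the intertwining relation $\Lambda Y=Y\Psi$; transposing and cancelling the invertible diagonal factor $X$ yields $\Phi^{M-L}\Lambda=\Psi^{T}\Phi^{M-L}$, so the columns of $\Phi^{M-L}$ are \emph{explicit} eigenvectors of $\Psi^{T}$ with eigenvalues $e^{-2\pi i\omega_j}$, and the rank-$s$ hypothesis caps the number of nonzero eigenvalues at $s$. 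You instead construct the particular solution $\Psi_\star=Y^\dagger\Lambda Y$, set up the orthogonal splitting $\CC^{M-L+1}=(\ker Y)^{\perp}\oplus\ker Y$, and show that every rank-$s$ solution is a block-lower-triangular perturbation of $\Psi_\star$ whose lower-right block the rank constraint forces to vanish. Your route is longer but buys more: it describes the whole affine family of solutions (nonuniqueness through the block $B$), and it makes fully explicit the step --- left implicit in the paper --- that the rank-$s$ hypothesis is precisely what excludes spurious nonzero eigenvalues supported on $\ker H_1$. The paper's route is shorter and hands you the eigenvectors (the columns of $\Phi^{M-L}$) in closed form, which is the structure the ESPRIT reconstruction actually exploits. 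One minor remark: both arguments tacitly assume $L\ge s$ and $M-L+1\ge s$ so that $\Phi^{L-1}$ is injective and $Y$ has full row rank; it would be worth stating this once.
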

\begin{remark}
Theorem \ref{thm1} implies  that the number of measurement data $(M+1) \ge 2s$ 
suffices to guarantee exact reconstruction. \end{remark}

\begin{proof}
From (\ref{10}) we have
\[
\Phi^{L-1}\Lambda Y=\Phi^{L-1} Y\Psi
\]
and hence
\beq
\label{13}
\Lambda Y=Y\Psi
\eeq
since $\Phi^{L-1}$ has full column rank.
Using (\ref{11}) and  transposing (\ref{13}) we obtain
\beqn
\Phi^{M-L}X\Lambda=\Psi^T\Phi^{M-L}X
\eeqn
implying
\beq\label{14}
\Phi^{M-L}\Lambda=\Psi^T\Phi^{M-L}
\eeq
since $X$ is diagonal, full rank and commutes with $\Lambda$.
Eq. (\ref{14}) means that the columns of $\Phi^{M-L}$ are the
eigenvectors of  the matrix $\Psi^T$ with the diagonal entries of $\Lambda$ as the corresponding eigenvalues. 

\end{proof}

Theorems \ref{thm0} and  \ref{thm1} motivate the following reconstruction procedure in the case
of noisy data.

Let $H^{\ep} = {\rm Hankel}(\yep)=H+E $ where $E = {\rm Hankel}(\ep)$.
Extracting $H_1^\ep$ and $H_2^\ep$ analogously from $H^\ep$ we have
\[
H_1^\ep=H_1+E_1,\quad H_2^\ep=H_2+E_2
\]
where $E_1$ and $E_2$ are two sub-matrices  of $E$ consisting, respectively,  of the first 
and last $L$ rows of $E$. 

\commentout{Let  the Singular Value Decomposition (SVD) of $H_1$ be written as
$$ H_1 = [\underbrace{U_1}_{L \times s} \ \underbrace{U_2}_{L\times (L-s)}] \ \underbrace{\text{diag}(\sigma_1,\sigma_2,\ldots,\sigma_s,0,\ldots,0)}_{L \times (M-L+1)} \ [\underbrace{V_1}_{(M-L+1) \times s} \ \underbrace{V_2}_{(M-L+1) \times (M-L+1-s)}]^\star$$
with the singular values $\sigma_1\geq \sigma_2\geq \sigma_3 \geq \cdots\sigma_s>0.$ }Let  the SVD  of $H_1^\ep$ be written as   
\[
H^\ep_1 = [\underbrace{U^\ep_1}_{L \times \tau} \ \underbrace{U^\ep_2}_{L\times (L-\tau)}] \ \underbrace{\text{diag}(\sigma^\ep_1,\sigma^\ep_2,\ldots,\sigma^\ep_s,\sigma^\ep_{s+1},\ldots)}_{L\times (M-L+1)} \ [\underbrace{V^\ep_1}_{(M-L+1) \times \tau} \ \underbrace{V^\ep_2}_{(M-L+1) \times (M-L+1-\tau)}]^\star
\]
with the singular values $\sigma^\ep_1\geq \sigma^\ep_2\geq \sigma^\ep_3 \geq \cdots\ge \sigma_L^\ep.$
Let  $\sigma_1\geq \sigma_2\geq \sigma_3 \geq \cdots\sigma_s>0$ be the nonzero singular values of $H_1$. 
Without loss of generality, we assume $L\leq M-L+1$ or equivalently $2L\leq M+1$.

The number of  frequencies $s$ may be estimated when there is a significant spectral gap.  For instance, 
according to \cite{Adam}, the spectral norm $\|E_1\|_2 $ of a random Hankel
 matrix from a zero mean,  independently and identically distributed (i.i.d.)  sequence of a finite variance is on the order of $\sqrt{M \log M}$ for $M\gg 1$ while 
 $\sigma_s$ for well-separated frequencies is $\mathcal{O}(M)$ (see next section). 
 Hence by  Weyl's theorem \cite{SS}
\beq
\label{Weyl}
|\si^{\ep}_j -\si_j| \le \|E_1\|_2, \quad  j = 1,2,\ldots,L
\eeq
the sparsity $s$ can be easily estimated based on the singular value distribution of $H^\ep$. 
Indeed, a spectral gap emerges because
$\si^\ep_j \le \|E_1\|_2, \ \forall  j\ge s+1$ and $\si^\ep_s \ge \si_s-\|E_1\|_2$.

Suppose the sparsity $s$ is known and set $\tau=s$. Let $\cP_s=U_1^\ep(U_1^\ep)^\star$ denotes the orthogonal projection onto the
singular subspace of the  $s$ largest singular values. Consider the equation
\beq
\label{15}
\cP_s  H_2^\ep=\cP_s H_1^\ep\Psi^\ep,
\eeq
equivalent to
\beq
U_1^\ep \Sigma^\ep_s(V_1^\ep)^\star\Psi^\ep
=\cP_s H_2^\ep,\quad \Sigma^\ep_s= \text{diag}(\sigma^\ep_1,\sigma^\ep_2,\ldots,\sigma^\ep_s) \label{16}
\eeq
Eq. (\ref{16})  can then be solved as
\beq
\hat \Psi
&=&\hat H_1^\dagger H_2^\ep\label{17}\\
\hat H_1&=&U_1^\ep \Sigma^\ep_s(V_1^\ep)^\star=\cP_s H_1^\ep \label{18}
\eeq
with  rank-$s$ $\hat \Psi$. 
Eq. (\ref{17})-(\ref{18}) defines the main steps of Single-Snapshot ESPRIT (SS-ESPRIT). The rest is to find
the nonzero eigenvalues of $\Psi^\ep$ and retrieve the frequencies
from these eigenvalues.

\commentout{
Let $\xmax = \max_{j=1,\ldots,s} |x_j|$ and $\xmin = \min_{j=1,\ldots,s} |x_j|$. The dynamic range of $x$ is defined as $\xmax/\xmin$. 
}

\section{Stability analysis}
\label{seccon}
\commentout{
For  a given matrix $A$, let $\smax(A)$ and $\smin(A)$ denote  the maximum and minimum {\em nonzero} singular values of $A$, respectively. Denote the spectral norm and Frobenius norm of $A$ by $\|A\|_2$ and $\|A\|_F$, respectively. 
}

First we have 
\beq
\nn
\|\hat\Psi- \Psi\|_2&\leq &\|( \hat H_1^\dagger-H_1^\dagger)H_2^\ep\|_2+
\|H_1^\dagger (H_2^\ep-H_2)\|_2\\
&\le& \|\hat H_1^\dagger-H_1^\dagger\|_2\|H_2^\ep\|_2+
\|H_1^\dagger\|_2\| E_2\|_2.\label{29}
\eeq

Suppose at first $\|E_1\|_2< \sigma_s$ (to be justified later) so that by Weyl's theorem $\sigma_s^\ep>0$ and $\rank(\hat H_1)=\rank(H_1)$. 
Wedin's inequality (\cite{SS}, Theorem III.3.8) asserts  that 
 \commentout{
\beq
\label{wed}
\|\hat H_1^\dagger-H_1^\dagger\|_{\rm F} \leq \max\{\|\hat H_1^\dagger\|_2^2, \|H_1^\dagger\|_2^2\} \|\hat H_1-H_1\|_{\rm F}
\eeq
}
\beq
\label{wed}
\|\hat H_1^\dagger-H_1^\dagger\|_2 \leq {1+\sqrt{5}\over 2}\|\hat H_1^\dagger\|_2 \|H_1^\dagger\|_2\|\hat H_1-H_1\|_2
\eeq
\commentout{
where $\mu$ is given in the following table
\beq
\label{table}
\begin{tabular}{c|c|c|c|}
$\|\cdot\|$ & \hbox{\rm arbitrary}&\hbox{\rm spectral}& \hbox{\rm Frobenius}\\
\hline
$\mu$&3&${1+\sqrt{5}\over 2}$&$ \sqrt{2}$.
\end{tabular}
\eeq
For our
purpose, we are primarily concerned with the spectral norm $\|\cdot\|_2$ and the Frobenius norm $\|\cdot\|_{\rm F}$.
}
First let us estimate $\|\hat H_1-H_1\|_2$. We have
\beqn
\|\hat H_1-H_1\|_2&=&\|\cP_s H_1^\ep-H_1\|_2\\
&\leq &\|\cP_s  H_1^\ep-H_1^\ep\|_2+\|H_1^\ep-H_1\|_2\\
&= &\|(I-\cP_s)  H_1^\ep\|_2+\|E_1\|_2
\eeqn
where $\cP^\perp_s= I-\cP_s$ is the projection onto the ``noise" subspace of $H_1^\ep$. 
Hence
\beqn
\|\cP^\perp_s H_1^\ep\|_2&=&\sigma^\ep_{s+1}
=\sigma^\ep_{s+1}-\sigma_{s+1}
\leq\|E_1\|_2
\eeqn
by Weyl's theorem (\ref{Weyl}). 
Therefore Wedin's bound (\ref{wed}) becomes
\beq
\label{wed2}
\|\hat H_1^\dagger-H_1^\dagger\|_2\leq 2\|\hat H_1^\dagger\|_2\|H_1^\dagger\|_2 \|E_1\|_2
\eeq
and consequently the bound (\ref{29}) becomes
\beq
\|\hat\Psi- \Psi\|_2&\leq &\|H_1^\dagger\|_2\big(2\|\hat H_1^\dagger\|_2 \|H_2^\ep\|_2 \|E_1\|_2
+\| E_2\|_2\big)\equiv \eta.\label{31}
\eeq

Next we use the discrete Ingham inequalities 
to estimate 
\[
\|H_1^\dagger\|_2=\sigma_s^{-1}, \quad \|\hat H_1^\dagger\|_2=(\sigma^\ep_s)^{-1},\quad \|H^\ep_2\|_2=\sigma_1^\ep. 
\]
The discrete Ingham inequalities are extension of the continuum version first proved in 
\cite{Ingham} (see also \cite{Young}).

\begin{theorem} \cite{my-music} Let $N$ be any integer. 
If $\supp$ satisfies  the separation condition 
\beqn
\label{sep}
\delta=\min_{j\neq l} d(\om_j,\om_l) > \frac 1 N\Big(1 - {2\pi \over N} \Big)^{-\frac 1 2}
\eeqn
then for  any $ z\in \CC^s$
\beq
{\|\Phi^N z\|_2^2\over \|z\|_2^2} \ge  N\Big(\frac{2}{\pi} - \frac{2}{\pi N^2 \delta^2}-\frac 4 N\Big). \label{35}
\eeq
Moreover, when $N$ is even 
\beq
 {\|\Phi^N {z}\|_2^2 \over \|z\|_2^2}
 \le 
 N\Big(\frac{4\sqrt 2}{\pi}  + \frac{\sqrt 2}{\pi N^2 \delta^2} + \frac{3\sqrt 2}{N}\Big)
 \label{36}
\eeq
 and when  $N$ is odd
\beq
\label{37}
{\|\Phi^N {z}\|_2^2\over \|z\|_2^2}
\le
\left(N+1\right)\left(\frac{4\sqrt 2}{\pi}  + \frac{\sqrt 2}{\pi (N+1)^2 \delta^2}+ \frac{3\sqrt 2}{N+1} \right). 
\eeq
\label{thm4}
\end{theorem}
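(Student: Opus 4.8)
The plan is to regard this as a sampled (discrete) version of Ingham's classical inequality and to adapt Ingham's kernel/windowing method. Writing $f_k=\sum_{j=1}^s z_j e^{-2\pi i k\om_j}$, observe that for any nonnegative weights $w_0,\dots,w_K$ one has the identity
\[
\sum_{k=0}^{K} w_k\,|f_k|^2=\sum_{j,l=1}^s z_j\bar z_l\,\widehat w(\om_j-\om_l),\qquad \widehat w(\theta):=\sum_{k=0}^{K}w_k\, e^{-2\pi i k\theta},
\]
so that, after isolating the diagonal and applying the Schur test to the Hermitian matrix $\big(\widehat w(\om_j-\om_l)\big)_{j\ne l}$,
\[
\Big|\,\sum_{k=0}^{K} w_k|f_k|^2-\widehat w(0)\,\|z\|_2^2\,\Big|\ \le\ \|z\|_2^2\ \max_{j}\ \sum_{l\ne j}\big|\widehat w(\om_j-\om_l)\big|.
\]
The whole proof then comes down to choosing $w$ so that: (i) $\widehat w(0)=\sum_k w_k$ has the advertised size; (ii) $|\widehat w(\theta)|$ decays fast enough in $d(\theta,0)$ that the off-diagonal sum is controlled by the separation $\delta$; and (iii) $w$ compares correctly with the indicator of $\{0,\dots,N\}$ --- from below for the lower bound, from above for the upper bounds.

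For the lower bound I would take the discrete Ingham window $w_k=\sin(\pi k/N)$, $k=0,\dots,N$, which satisfies $0\le w_k\le 1$, so that $\sum_{k=0}^N w_k|f_k|^2\le\|\Phi^N z\|_2^2$. A geometric-sum computation gives the closed form
\[
\widehat w(\theta)=\frac{\sin(\pi/N)\,(q^{N+1}+q)}{q^{2}-2q\cos(\pi/N)+1},\qquad q=e^{-2\pi i\theta},
\]
whence $\widehat w(0)=\cot(\pi/2N)$, equal to $\tfrac{2N}{\pi}$ up to a correction of order $1/N$; and, using $|q^{N+1}+q|=2|\cos(\pi N\theta)|\le 2$ and $|q^{2}-2q\cos(\pi/N)+1|=2|\cos 2\pi\theta-\cos(\pi/N)|$,
\[
|\widehat w(\theta)|\ \le\ \frac{\sin(\pi/N)}{\cos(\pi/N)-\cos 2\pi\theta}\qquad\text{whenever } \tfrac1N<d(\theta,0)\le\tfrac12,
\]
which is decreasing in $d(\theta,0)$ on this range. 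Since the $\om_j$ are $\delta$-separated on $\TT$ with $\delta>1/N$, each shell $\{l:\ d(\om_j,\om_l)\in[m\delta,(m+1)\delta)\}$ holds at most two indices, so $\sum_{l\ne j}|\widehat w(\om_j-\om_l)|\le 2\sum_{m\ge1}\sin(\pi/N)/(\cos(\pi/N)-\cos 2\pi m\delta)$; the hypothesis $\delta>\tfrac1N(1-\tfrac{2\pi}{N})^{-1/2}$ is precisely what makes the $m=1$ denominator positive and the series summable. Bounding this series --- peeling off the $m=1$ term, which produces the $\tfrac{2}{\pi N^2\delta^2}$ correction, and dominating the rest, via $\cos x\le 1-\tfrac{x^2}{2}+\tfrac{x^4}{24}$ and $\sin x\ge\tfrac2\pi x$, by a convergent comparison series of total at most $\tfrac4N$ once the factor $N$ is pulled out --- gives $\|\Phi^N z\|_2^2/\|z\|_2^2\ge N\big(\tfrac2\pi-\tfrac{2}{\pi N^2\delta^2}-\tfrac4N\big)$.

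For the two upper bounds I would run the dual construction: take nonnegative weights $w_k$ of the same $c\sin(\alpha+\beta k)$ type (so $\widehat w$ again has an explicit closed form with $1/d(\theta,0)^2$-type decay), supported on an index set containing $\{0,\dots,N\}$ and with $w_k\ge 1$ there, so that $\|\Phi^N z\|_2^2\le\sum_k w_k|f_k|^2$. The natural window is symmetric about the midpoint $N/2$, which is clean when $N$ is even; when $N$ is odd one simply replaces $N$ by $N+1$ --- adjoining one sample only enlarges $\sum_k w_k|f_k|^2$ --- which is exactly the source of the two cases in the statement. Then $\widehat w(0)=\sum_k w_k$ produces the leading constant $4\sqrt2/\pi$, the $m=1$ off-diagonal term the $\tfrac{\sqrt2}{\pi N^2\delta^2}$ correction, and the tail the $3\sqrt2/N$ term, via the same off-diagonal estimate.

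I expect the main obstacle to be not the architecture but the constant-chasing: pinning down the explicit numbers in the off-diagonal sums so that they land exactly on $-\tfrac4N$ (respectively $+\tfrac{3\sqrt2}{N}$) and on the stated $\delta^{-2}$ corrections, while also accommodating the $O(1/N)$ deficit between $\cot(\pi/2N)$ and $2N/\pi$. This requires sharp two-sided elementary bounds on $\cos$ and $\sin$ valid over the entire range $(1/N,1/2]$ of admissible distances on the torus, together with a careful count of how many $\om_l$ can lie in each shell around a given $\om_j$; and the precise separation hypothesis is calibrated so that even the crudest, nearest-neighbour shell is already under control, failure of which would destroy summability.
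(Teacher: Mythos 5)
Your proposal is correct and follows essentially the same route as the paper's proof: the weighted quadratic-form identity with the window $w_k=\sin(\pi k/N)=\cos\pi(k/N-\tfrac12)$, a diagonal/off-diagonal split with the Schur--Gershgorin bound and a two-points-per-shell count, and the upper bound obtained by doubling the window length, recentering by a diagonal modulation so that the weights are bounded below by $1/\sqrt2$ on the original index range, with the odd case handled by padding one sample. The only real differences are cosmetic --- you extract the kernel's size at $0$ and its decay from an exact geometric-sum closed form, where the paper uses a Riemann-sum comparison with $\|g\|_1=2/\pi$ and the Poisson summation formula --- and one small imprecision: the separation hypothesis is calibrated to make the final lower bound positive (and to guarantee $\delta>1/N$ for the shell comparison), not to make the $m=1$ denominator positive, which already holds for $\delta>1/(2N)$.
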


By the Vandermonde decomposition (\ref{H1}) for $H_1$,  Theorem \ref{thm4} with
$N=L-1, M-L$, immediately implies the following. 
\begin{corollary} Under the separation condition
\beq
\delta> \max\Big\{{1\over L-1} \Big(1 - {2\pi \over L-1} \Big)^{-\frac 1 2}, {1\over M-L} \Big(1 - {2\pi \over M-L} \Big)^{-\frac 1 2} \Big\}\label{sep3}
\eeq
we have 
\beq
\sigma_s &\ge &{2\xmin \over \pi}{
\left( {L-1} - \frac{1}{(L-1) \delta^2} - {2\pi}\right)^{1/2}
\left( {M-L} - \frac{1}{(M-L) \delta^2} - {2\pi}\right)^{1/2}}\label{22}\\
\sigma_1&\le& {4\sqrt{2}\xmax\over \pi} \left({L}  + \frac{1}{4 L \delta^2}+ {3\pi\over 4} \right)^{1/2}\left(M-L+1 + \frac{1}{4(M-L+1) \delta^2}+ {3\pi\over 4} \right)^{1/2}\label{24}
\eeq
where 
\[
\xmin=\min_j\{|x_j|\}, \quad \xmax=\max_j\{|x_j|\}. 
\]

\end{corollary}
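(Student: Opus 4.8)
The plan is to transfer both estimates from $\Phi^{L-1}$ and $\Phi^{M-L}$ to $H_1$ through the Vandermonde factorization $H_1=\Phi^{L-1}X(\Phi^{M-L})^{T}$ of (\ref{H1}), using only submultiplicativity of the spectral norm together with the behaviour of singular values under products with the invertible diagonal factor $X=\text{diag}(x_1,\dots,x_s)$ (for which $\sigma_{\max}(X)=\xmax$, $\sigma_{\min}(X)=\xmin$). First I would record that under the separation condition (\ref{sep3}) the right-hand side of the Ingham lower bound (\ref{35}) is strictly positive for $N=L-1$ and for $N=M-L$; since $s$ points of $\TT$ that are pairwise more than $\delta$ apart cannot number more than $\lfloor 1/\delta\rfloor$, and (\ref{sep3}) forces $1/\delta<\min\{L-1,M-L\}$, both $\Phi^{L-1}$ (with $L$ rows) and $\Phi^{M-L}$ (with $M-L+1$ rows) have full column rank $s$. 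Hence $H_1$ has rank exactly $s$, so $\sigma_1=\|H_1\|_2$ and $\sigma_s=\|H_1^{\dagger}\|_2^{-1}$.

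For the upper bound (\ref{24}), submultiplicativity gives $\sigma_1\le\|\Phi^{L-1}\|_2\,\|X\|_2\,\|(\Phi^{M-L})^{T}\|_2=\xmax\,\sigma_1(\Phi^{L-1})\,\sigma_1(\Phi^{M-L})$, transposition leaving singular values unchanged. For the lower bound (\ref{22}), the reverse-order law for pseudoinverses --- valid here precisely because $\Phi^{L-1}$ has full column rank, $X$ is invertible, and $(\Phi^{M-L})^{T}$ has full row rank --- gives $H_1^{\dagger}=\big((\Phi^{M-L})^{T}\big)^{\dagger}X^{-1}(\Phi^{L-1})^{\dagger}$, whence $\|H_1^{\dagger}\|_2\le\sigma_s(\Phi^{M-L})^{-1}\,\xmin^{-1}\,\sigma_s(\Phi^{L-1})^{-1}$ and therefore $\sigma_s\ge\xmin\,\sigma_s(\Phi^{L-1})\,\sigma_s(\Phi^{M-L})$.

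It then remains to insert Theorem \ref{thm4}. Substituting the lower bound (\ref{35}) with $N=L-1$ and with $N=M-L$ and multiplying the two yields (\ref{22}) verbatim. For (\ref{24}) I would apply the upper bounds (\ref{36})--(\ref{37}) to $\Phi^{L-1}$ and $\Phi^{M-L}$; writing $\phi(x)=x+\tfrac1{4x\delta^{2}}+\tfrac{3\pi}{4}$, the even/odd dichotomy in Theorem \ref{thm4} bounds $\sigma_1(\Phi^{L-1})^{2}$ by $\tfrac{4\sqrt2}{\pi}\phi(L-1)$ or $\tfrac{4\sqrt2}{\pi}\phi(L)$, and $\sigma_1(\Phi^{M-L})^{2}$ by $\tfrac{4\sqrt2}{\pi}\phi(M-L)$ or $\tfrac{4\sqrt2}{\pi}\phi(M-L+1)$. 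Since $\phi'(x)=1-\tfrac1{4x^{2}\delta^{2}}>0$ as soon as $x\delta>\tfrac12$ --- which holds for $x\in\{L-1,M-L\}$ because (\ref{sep3}) already forces $(L-1)\delta,(M-L)\delta>1$ --- $\phi$ is increasing on the relevant range, so both quantities are bounded by $\tfrac{4\sqrt2}{\pi}\phi(L)$, respectively $\tfrac{4\sqrt2}{\pi}\phi(M-L+1)$, and forming the product with $\xmax$ gives (\ref{24}). Running the identical argument on $H_2=\Phi^{L-1}\Lambda X(\Phi^{M-L})^{T}$, with the unitary $\Lambda$ absorbed into $X$ (which changes no singular value), shows that the right-hand side of (\ref{24}) also bounds $\|H_2\|_2$.

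The only genuinely delicate point is this even/odd bookkeeping in the last step together with the monotonicity check for $\phi$; everything else is submultiplicativity, the pseudoinverse reverse-order law, and direct substitution into Theorem \ref{thm4}.
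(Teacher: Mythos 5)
Your proposal is correct and follows exactly the route the paper intends: the paper offers no proof beyond the one-line assertion that the Vandermonde decomposition $H_1=\Phi^{L-1}X(\Phi^{M-L})^{T}$ together with Theorem \ref{thm4} applied at $N=L-1$ and $N=M-L$ ``immediately implies'' the corollary, and your argument supplies precisely the omitted bookkeeping. In particular the submultiplicativity bound for $\sigma_1$, the reverse-order law for $H_1^\dagger$ giving $\sigma_s\ge \xmin\,\sigma_s(\Phi^{L-1})\sigma_s(\Phi^{M-L})$, and the monotonicity check that absorbs the even/odd cases of \eqref{36}--\eqref{37} into the single expressions with $L$ and $M-L+1$ are all sound and reproduce \eqref{22}--\eqref{24} verbatim.
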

\begin{remark}
For a fixed $M$, the right hand side of (\ref{22}) can be maximized by $L=\big[{M+1\over 2}\big]$, the largest integer not greater than ${M+1\over 2}$, under which the separation condition (\ref{sep3}) becomes
\beq
\delta>  \ell\equiv {2 \over M}\Big(1 - {4\pi \over M} \Big)^{-\frac 1 2}\label{sep2}
\eeq
and hence
the bounds (\ref{22})-(\ref{24}) become
\beq
\sigma_1&\le& {\xmax M 2\sqrt 2\over \pi}\left(1+ \frac{1}{M^2 \delta^2}+ {2\over M}+{3\pi\over 2M} \right)\label{30}\\
\sigma_s &\ge &{\xmin M\over \pi }\left(1 - \frac{4}{ M^2\delta^2} - {4\pi \over M}\right).
\label{23}
\eeq
\end{remark}

Finally to  justify the assumption  $\sigma_s>\|E_1\|_2$ it suffices to have
\beq
\|E_1\|_2<{\xmin M\over \pi }\left(1 - \frac{4}{ M^2\delta^2} - {4\pi \over M}\right)
\eeq
under  (\ref{sep2}) (which renders the right hand side positive).

By (\ref{30})-(\ref{23}) and Weyl's theorem, we obtain 
\beq
\|H_1^\dagger\|_2&\leq& {\pi\over \xmin M }\left(1 - \frac{4}{ M^2\delta^2} - {4\pi \over M}\right)^{-1}\label{46}\\
\|\hat H_1^\dagger\|_2&\leq& \left({\xmin M \over \pi }\left(1 - \frac{4}{ M^2\delta^2} - {4\pi \over M}\right)-\|E_1\|_2\right)^{-1}\label{47}\\
\|H^\ep_2\|_2&\leq & {\xmax M 2\sqrt 2\over \pi}\left(1+ \frac{1}{M^2 \delta^2}+ {2\over M}+{3\pi\over 2M} \right)\label{48}
\eeq
which lead to the corresponding bound on $\eta$ via (\ref{31}). 

Summarizing the preceding analysis, we have the following theorem
\begin{theorem}
Let $\rho=\delta M$ be the minimum separation in the unit of RL.
Under the separation condition (\ref{sep2}),  or  equivalently
\beq
\label{sep4}
 \rho> 2\Big(1 - {4\pi \over M} \Big)^{-\frac 1 2}, 
\eeq
and
\[
\|E_1\|_2<
{\xmin M \over \pi }\left(1 - \frac{4}{ M^2\delta^2} - {4\pi \over M}\right)
\]
we  have 
\beqn
\|\hat\Psi-\Psi\|_2 &\leq &\|H_1^\dagger\|_2\big(2\|\hat H_1^\dagger\|_2 \|H_2^\ep\|_2 \|E_1\|_2
+\| E_2\|_2\big)\equiv \eta\label{51}
\eeqn
with an upper bound given by  (\ref{46})-(\ref{48}). In particular, for  $ M\gg 1$, $\eta$ has the asymptotic
\beqn
\eta\approx {\pi\over \xmin(1-4\rho^{-2})}\left[ {4\sqrt{2}(1+\rho^{-2})\xmax\over(1-4\rho^{-2}) \xmin-\pi\|E_1\|_2/M}{\|E_1\|_2\over M}+
{\|E_2\|_2\over M}\right].
\eeqn

\label{thm}
\end{theorem}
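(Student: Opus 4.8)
Since the statement consolidates the estimates built up in this section, the plan is to assemble them in order and then read off the large-$M$ behavior. I would start from the purely algebraic perturbation bound (\ref{29}): writing $\hat\Psi-\Psi=(\hat H_1^\dagger-H_1^\dagger)H_2^\ep+H_1^\dagger(H_2^\ep-H_2)$ and using the triangle inequality, submultiplicativity of $\|\cdot\|_2$, and $H_2^\ep-H_2=E_2$, one obtains $\|\hat\Psi-\Psi\|_2\le\|\hat H_1^\dagger-H_1^\dagger\|_2\|H_2^\ep\|_2+\|H_1^\dagger\|_2\|E_2\|_2$ with no hypotheses needed.

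The next step is to control $\|\hat H_1^\dagger-H_1^\dagger\|_2$ through Wedin's inequality (\ref{wed}), and this is where the one genuinely delicate point enters: Wedin applies only if the perturbed rank is still $s$. Under the standing hypothesis $\|E_1\|_2<\sigma_s$ (verified at the end), Weyl's theorem (\ref{Weyl}) gives $\sigma_s^\ep\ge\sigma_s-\|E_1\|_2>0$, hence $\rank(\hat H_1)=\rank(H_1)=s$. To feed Wedin I would bound $\|\hat H_1-H_1\|_2=\|\cP_sH_1^\ep-H_1\|_2\le\|(I-\cP_s)H_1^\ep\|_2+\|E_1\|_2=\sigma_{s+1}^\ep+\|E_1\|_2\le 2\|E_1\|_2$, the last step again by Weyl since $\sigma_{s+1}=0$. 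Substituting into (\ref{wed}) gives (\ref{wed2}), and then (\ref{29}) becomes the quantity $\eta$ of (\ref{31}).

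To make $\eta$ explicit I would use the identifications $\|H_1^\dagger\|_2=\sigma_s^{-1}$, $\|\hat H_1^\dagger\|_2=(\sigma_s^\ep)^{-1}$, and the crude bound $\|H_2^\ep\|_2\le\|H^\ep\|_2\le\sigma_1(H)+\|E\|_2$, together with the discrete Ingham inequalities. Specializing Theorem \ref{thm4} to the Vandermonde decompositions (\ref{H1}) with $L=[(M+1)/2]$ --- the choice that maximizes the right-hand side of (\ref{22}) and collapses the separation requirement to (\ref{sep2})/(\ref{sep4}) --- produces the two-sided bounds (\ref{30})-(\ref{23}); combining these with Weyl's theorem (for $\sigma_s^\ep\ge\sigma_s-\|E_1\|_2$ and for $\|H^\ep\|_2\le\sigma_1(H)+\|E_1\|_2$) yields (\ref{46})-(\ref{48}), and inserting these into (\ref{31}) gives the claimed explicit upper bound on $\eta$. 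At this point I would close the loop by noting that the theorem's hypothesis $\|E_1\|_2<\frac{\xmin M}{\pi}(1-\frac{4}{M^2\delta^2}-\frac{4\pi}{M})$ is, via (\ref{23}), exactly the condition $\|E_1\|_2<\sigma_s$ used above, and that (\ref{sep2}) makes its right-hand side positive.

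The remaining task is the asymptotic: put $\rho=\delta M$, keep the $O(1)$ factors $1-4\rho^{-2}$ and $1+\rho^{-2}$ while discarding the $O(1/M)$ corrections inside the parentheses of (\ref{46})-(\ref{48}), and track the powers of $M$ --- since $\|H_1^\dagger\|_2,\|\hat H_1^\dagger\|_2=\Theta(M^{-1})$ and $\|H_2^\ep\|_2=\Theta(M)$, the product in (\ref{31}) scales like $M^{-1}$ and the displayed approximation drops out. This last part is routine bookkeeping; the only place demanding genuine care is the interlock in the previous paragraph between the noise level and $\sigma_s$, because Wedin's estimate is vacuous unless the rank is preserved, so the quantitative Ingham lower bound on $\sigma_s$ is precisely what makes the argument non-trivial.
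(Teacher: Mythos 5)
Your plan reproduces the paper's own argument step for step: the splitting (\ref{29}), Wedin's inequality applied only after rank preservation is secured via Weyl's theorem and the hypothesis $\|E_1\|_2<\sigma_s$, the estimate $\|\hat H_1-H_1\|_2\le \sigma^\ep_{s+1}+\|E_1\|_2\le 2\|E_1\|_2$ leading to (\ref{wed2}) and (\ref{31}), the discrete Ingham bounds with $L=\big[{M+1\over 2}\big]$ giving (\ref{46})--(\ref{48}) and closing the loop on the noise condition via (\ref{23}), and finally the large-$M$ bookkeeping. This is essentially the paper's proof, correctly assembled, with no substantive differences in route.
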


As noted before,  the spectral norm of a random Hankel
 matrix from a zero mean, i.i.d.  sequence of a finite variance is on the order of $\sqrt{M \log M}$ \cite{Adam}. Therefore for i.i.d. noise the error bound in Theorem \ref{thm}  tends
 to zero like $\sqrt{\log M/M}$ with a constant depending on the dynamic range $\xmax/\xmin$ and the minimum separation $\rho>2$ in the unit of RL. 
 
Now we are ready to use Elsner's theorem (\cite{SS}, Theorem IV.1.3) to conclude
\beq
\mu_{\rm H} (\hat\Psi, \Psi)
\leq \Big(\|\hat\Psi\|_2+\|\Psi\|_2\Big)^{1-{1\over M-L+1}}\|\hat\Psi-\Psi\|^{1\over M-L+1}_2\label{53}
\eeq
where \[
\mu_{\rm H}(\hat\Psi, \Psi)=\max\big\{\max_{i} \min_j |\hat \lambda_i-\lambda_j|,\,\,
\max_{j} \min_i |\hat \lambda_i-\lambda_j|\big\}
\]
is  the Hausdorff Metric (HM) of the two sets of eigenvalues in question. Bound (\ref{53}) can be
made more concrete by using Theorem \ref{thm} and the fact $\|\Psi\|_2=1$:
\beq
\mu_{\rm H}(\hat\Psi, \Psi)
\leq \Big(2+\eta  \Big)^{1-{1\over M-L+1}}\eta^{1\over M-L+1}\label{54}
\eeq

\commentout{
Now we are ready to use the Bauer-Fike theorem \cite{BF} to bound the differences in
eigenvalues in terms of the difference in the spectral norm. To this end, we need to construct
an invertible matrix $\widetilde Y$ which diagonalizes $\Psi$ i.e.
\beq
\label{33}
\Psi=\widetilde Y^{-1} \widetilde \Lambda \widetilde Y
\eeq
where $\widetilde \Lambda=\text{diag}(e^{i-2 \pi \omega_1}, \dots, e^{-i2 \pi \omega_s}, 0,\dots, 0)\in \CC^{(M-L+1)\times (M-L+1)}$. Then the Bauer-Fike theorem asserts
\beq
\label{44}
\min_{\lambda}|\lambda-\hat \lambda|\leq \|\widetilde Y\|_2\|\widetilde Y^{-1}\|_2\|\hat\Psi-\Psi\|_2,\quad \lambda\in \{0, e^{-i2\pi \omega_1},\dots, e^{-i2\pi \omega_s}\}
\eeq
where $\hat \lambda$ is any eigenvalue of $\hat \Psi$ and $F$ is given in (\ref{31}).
{\color{red} This bound is useless as we can't rule out the possibility that $\hat \lambda$ may be clustered.
It would have been much better if we can approximate $e^{-i2\pi \om_j}$ by
$\hat\lambda$ which requires the diagonalizability of $\hat \Psi$.}

Let the SVD of $Y$ be written as
\[
Y={A} \ \underbrace{\text{diag}(\eta_1,\eta_2,\ldots,\eta_s)}_{s\times (M-L+1)} B^\star
\]
and let $C$ be any $(M-L+1-s)\times (M-L+1-s)$ unitary matrix. 
Let
\[
\widetilde Y=\left[\begin{matrix}{A} &0\\
0&C
\end{matrix}\right] \ \underbrace{\text{diag}(\eta_1,\eta_2,\ldots,\eta_s,\eta_s,\eta_s,\dots,\eta_s)}_{(M-L+1)\times (M-L+1)} \ B^\star.
\]
It is straightforward to verify eq. (\ref{33}) using the identify $\Psi=Y^\dagger \Lambda Y$.

We also have $\|\widetilde Y\|_2=\eta_1=\|Y\|_2, \|\widetilde Y^{-1}\|_2=\eta_s^{-1}=\|Y^\dagger\|_2$. On the other hand, in view of the definition (\ref{11}), the discrete Ingham inequalities (\ref{35})-(\ref{37}) with $N=M-L+1$  are applicable here and give
\begin{corollary} Under the separation condition (\ref{sep}) 
\beq
\eta_1&\le&\xmax \sqrt{4\sqrt{2}\over \pi}\left(M-L+1 + \frac{1}{4(M-L+1) \delta^2}+ {3\pi\over 4} \right)^{1/2}\label{40}\\
\eta_s &\ge &\xmin\sqrt{2\over \pi}
\left( {M-L} - \frac{1}{(M-L) \delta^2} - {2\pi}\right)^{1/2}\label{41}\\
\eeq
with $\xmin=\min_j\{x_j\}, \xmax=\max_j\{x_j\}$. 
\label{cor2}
\end{corollary}
Corollary \ref{cor2} leads immediately to the bound on the condition number $\kappa_2$  of $\widetilde Y$
\beq
\kappa_2=\|Y\|_2\|Y^\dagger\|_2\leq \sqrt{2\sqrt{2}}{\xmax\over \xmin}\left(M-L+1 + \frac{1}{4(M-L+1) \delta^2}+ {3\pi\over 4}\over {M-L} - \frac{1}{(M-L) \delta^2} - {2\pi}\right)^{1/2}\label{cond}.
\eeq

Combining  (\ref{31}) and (\ref{44}), we have 
\beq
\min_{\lambda}|\lambda-\hat \lambda|\leq \kappa_2\|F\|_2,\quad \lambda\in \{0, e^{-i2\pi \omega_1},\dots, e^{-i2\pi \omega_s}\}\label{49}
\eeq
where $\kappa_2$ is bounded via (\ref{cond}) and $Y$ is bounded via (\ref{46})-(\ref{48}). 
\section{Conclusion}
The full expression of the bound (\ref{49}) may be complicated but its meaning is
not hard to grasp if we let $M\to \infty$ with $L=\Big[{M+1\over 2}\Big]$. In this case,
if the true frequencies are separated by slightly more than $2/M$ (2 RL),
say $\delta M=\rho>2$, 
then every eigenvalue of SS-ESPRIT estimate $\hat \Psi$ lies
with the radius 
\beq
\label{50}
\sqrt{2\sqrt{2}}{\xmax\over \xmin}\sqrt{\rho^2+1\over \rho^2-4} {\pi\over \xmin(1-4\rho^{-2})}\left[ {4\sqrt{2}(1+\rho^{-2})\xmax\over(1-4\rho^{-2}) \xmin-\pi\|E_1\|_2/M}{\|E_1\|_2\over M}+
{\|E_2\|_2\over M}\right]
\eeq
of some true Fourier mode $e^{-i2\pi \omega_j}$. As noted before,  the spectral norm of a random Hankel
 matrix from a zero mean, i.i.d.  sequence of a finite variance is on the order of $\sqrt{M \log M}$ \cite{Adam}. Therefore for i.i.d. noise the error bound (\ref{50}) tends
 to zero like $\sqrt{\log M/M}$ with a constant depending on the dynamic range $\xmax/\xmin$ and the minimum separation $\rho>2$ in the unit of RL. 
}

\section{Conclusion}
\begin{figure}[t]
\begin{minipage}[b]{0.5\linewidth}
\centering
\includegraphics[width=\textwidth]{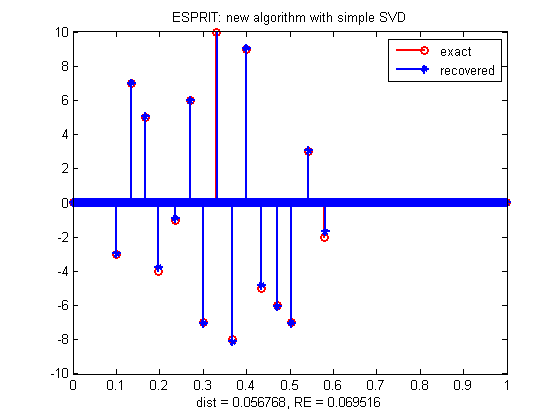}
\label{fig:figure1}
(a) ESPRIT: $\mu_{\rm H}(\hat{S},S) = 0.057$RL. 
\end{minipage}
\hspace{0.5cm}
\begin{minipage}[b]{0.5\linewidth}
\centering
\includegraphics[width=\textwidth]{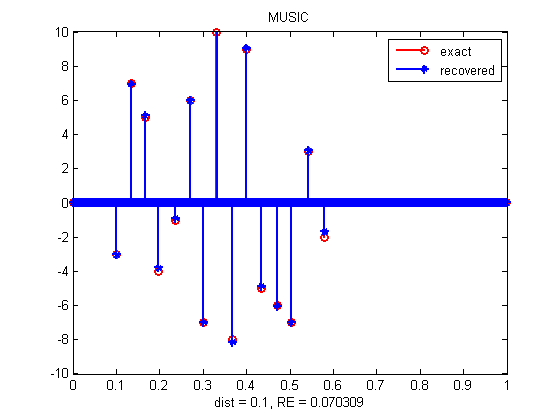}
\label{fig:figure2}
(b) MUSIC: $\mu_{\rm H}(\hat{S},S) = 0.1$RL. 
\end{minipage}
\commentout{
\begin{minipage}[b]{0.5\linewidth}
\centering
\includegraphics[width=\textwidth]{BLOOMP_34_noisy.png}
\label{fig:figure1}
(a)  BLOOMP: $\mu_{\rm H}(\hat{S},S) = 0.133$RL.  
\end{minipage}
}
\caption{Reconstruction of 15 real-valued frequencies separated by 3-4 RL with $10\%$ NSR.} \label{fig2}
\end{figure}

\begin{figure}[h]
\centering
\includegraphics[width=15cm]{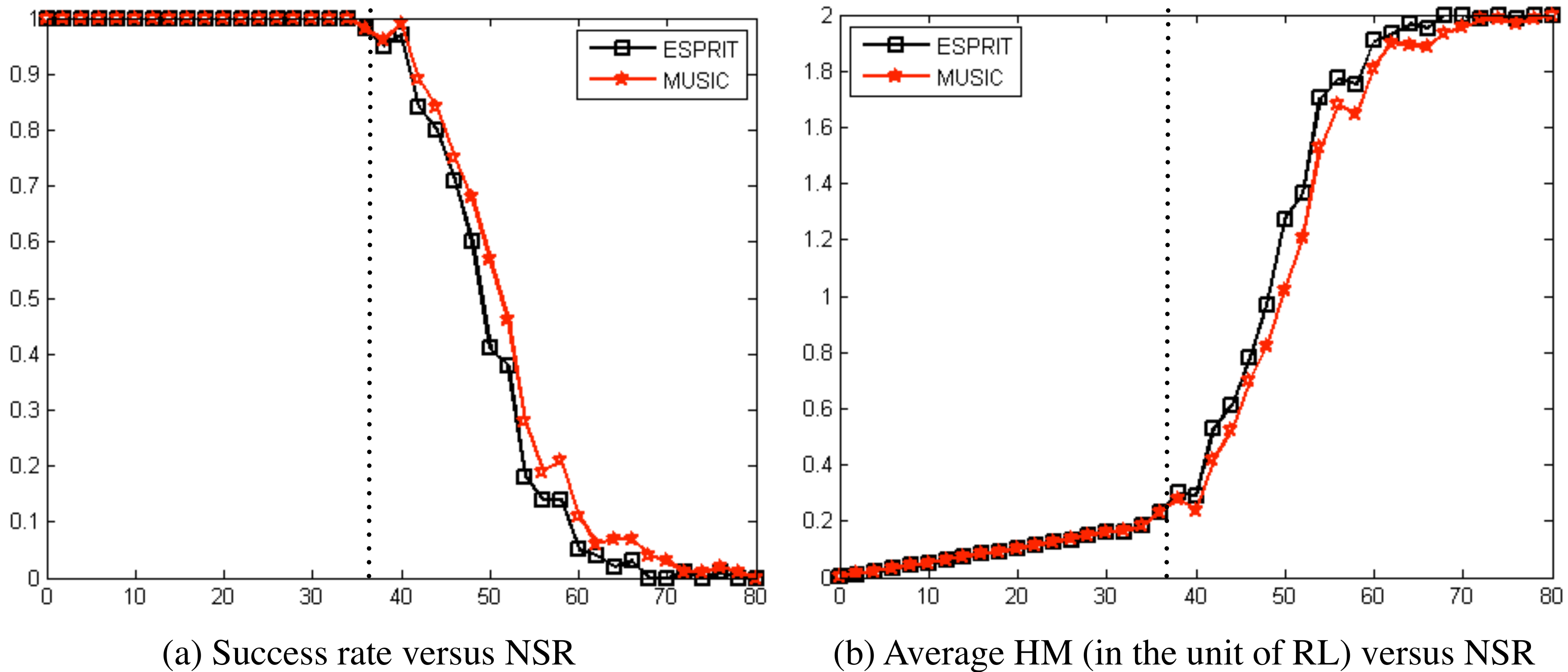}
\commentout{
\subfigure[HM for  separation $2 \sim  3$  RL]{
\includegraphics[width=7cm]{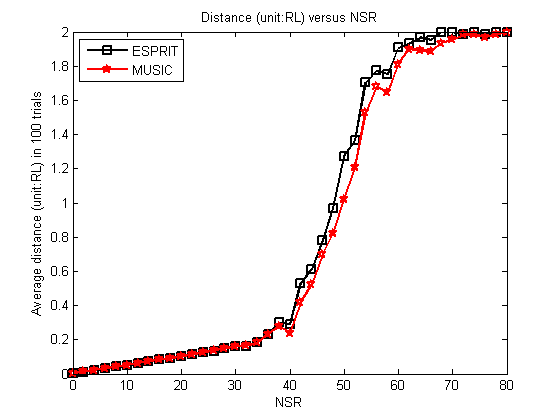}
}
\subfigure[Success rate for separation $2 \sim 3$ RL]{
\includegraphics[width=7cm]{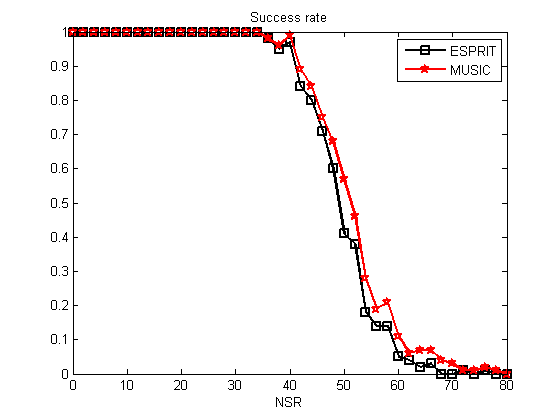}
}
\subfigure[Success rate for  separation $4\sim 5$ RL]{
\includegraphics[width=7cm]{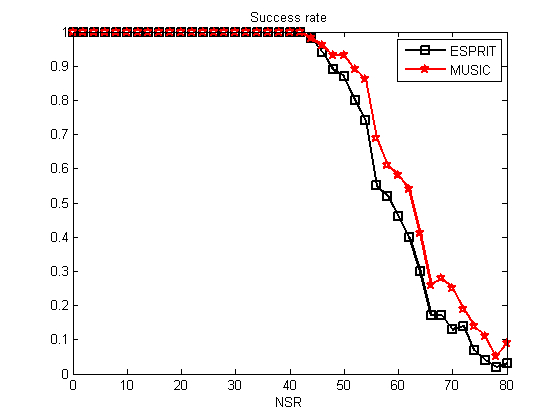}
}
\subfigure[HM for separation $4 \sim  5$ RL]{
\includegraphics[width=7cm]{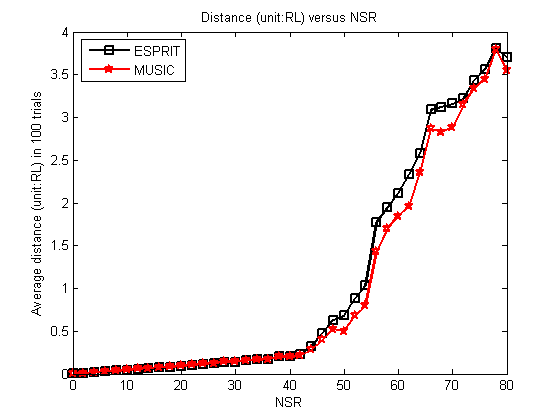}
}
}
\caption{(a) Success rate and (b) average HM vs. NSR (in percentage) 
for separation of 2-3 RL}
\label{fig1}
\end{figure}

In conclusion, we have given performance guarantees for
SS-ESPRIT. 
In particular, for noiseless measurement with $M+1\geq 2s$, Theorems \ref{thm0} and \ref{thm1} guarantee exact recovery for any subset $\cS\subset [0,1]$ of $s$ frequencies. 
For noisy measurement, Theorem \ref{thm} guarantees
noise stability under the separation condition
\beqn
\rho>  2\Big(1 - {4\pi \over M} \Big)^{-\frac 1 2} 
\eeqn
in the unit of RL. 
This separation and sparsity constraint compares favorably with 
those of other approaches to compressed sensing in the continuum
which are at least 3-4  RL
\cite{Csr2,Csr1,Chi, DNN,FL, Tang}. 

Numerical simulation demonstrates stability to a significant level of noise. The noise $\ep$ is additive i.i.d. complex Gaussian, i.e. $\epsilon \sim N(0, \nu^2 I) + i N(0, \nu^2 I)$ of various strength in terms of 
 the Noise-to-Signal Ratio (NSR)  
$$\text{NSR} = \mathbb{E}\{||\ep||_2\}/||y||_2 = \nu \sqrt{2(M+1)}/||y||_2$$
where $M=100$. 
The error metric the Hausdorff metric $\mu_{\rm H}(\cS, \widehat \cS)$  between the exact $\mathcal{S}$ and recovered $\widehat{\mathcal{S}}$ sets of frequencies. 
\commentout{
\begin{equation} 
\mu_{\rm H}(\hat{\mathcal{S}}, \mathcal{S}) = \max \left \{ \max_{\hat{\omega} \in \hat{\mathcal{S}}} \min_{\omega \in \mathcal{S}} d(\hat{\omega},\omega), \max_{\omega \in \mathcal{S}} \min_{\hat{\omega} \in \hat{\mathcal{S}}} d(\hat{\omega},\omega) \right \}.
\end{equation}
}
We use two reconstruction methods: SS-ESPRIT analyzed above and 
MUSIC studied in \cite{my-music} (see also \cite{Cheney,CheneyRadar,MUSIC,Kirsch})  both of which
employ the Hankel matrix (\ref{hankel}) and the Vandermonde decomposition (\ref{van}).

Fig. \ref{fig2} shows an instance of reconstruction of 15 frequencies that are randomly distributed,  separated
by 3-4 RL and have real-valued amplitudes of   dynamical range $\xmax/\xmin=10$, from
$M=100$ measured data of $10\%$ NSR. Both ESPRIT and MUSIC perform well with comparable accuracy.

For Fig. \ref{fig1}, the frequency set $\cS$ consists of 20 randomly selected frequencies separated by $2-3$ RL, with randomly phased amplitudes $x$  of equal strength (i.e. the dynamic range $\xmax/\xmin=1$). A reconstruction is {\em successful}  if  $\mu_{\rm H}(\widehat{\mathcal{S}},\mathcal{S}) \le 1$RL.

Fig.\ref{fig1}(a)  shows the success rate for 100 independent trials versus NSR.  Clearly a ``phase transition" occurs at the threshold NSR $\approx 37\%$ beyond which  the success rate begins to drop precipitously. The threshold NSR depends on the frequency spacings, the numbers of  data and frequencies as well as the dynamic range. 

Fig.\ref{fig1}(b) shows $\mu_{\rm H} (\cS,\widehat\cS)$,  averaged over 100 independent trials, versus NSR  and  exhibits the same
phase transition where the rapid growth of $\mu_{\rm H}$ is due to reconstruction failure. Notably  the average $\mu_{\rm H}$ below the threshold  does not exceed $0.2$RL, much better than the success criterion of 
$1$RL. 

Again the performances of ESPRIT and MUSIC 
 are comparable in Fig. \ref{fig1}  with 
the main difference being the speed of computation: SS-ESPRIT
is about ten times faster than MUSIC in our simulation.

\bigskip
{\bf Acknowledgements.} Research is supported in part by  US NSF grant DMS-1413373 and Simons Foundation grant 275037. I thank Lu Li for help in preparing the figures.
\bigskip

\bibliographystyle{plain}	
\bibliography{myref}		
 
\commentout{
\appendixtitleon
\appendixtitletocon
\begin{appendices}
\section{Proof of Theorem \ref{thm3}}
\label{app2}
Fixing $\supp = \{\om_1,\ldots,\om_s\}\subset \TT$, we define the matrix $\Phi^{N_1 \rightarrow N_2}$ such that $$\Phi^{N_1 \rightarrow N_2}_{kj} = e^{-2\pi i k \om_j}, \ k = N_1,\ldots,N_2,\ j = 1,\ldots,s.$$
For simplicity, we denote $\Phi^M = \Phi^{0 \rightarrow M}.$

Let 
\beq
G(\om) = \sum_{k = 0}^{L} g(\frac k L)e^{2\pi i k \om},\quad g(t) = \cos \pi (t-0.5).
\label{eqG}
\eeq
Function $G$ has the following properties.
\begin{lemma}
\label{lemma4}
\begin{enumerate}
   \setlength{\itemsep}{0.2cm}
\item $G(\om+n) = G(\om )$ for $n \in \ZZ$.

\item $G(-\om) = e^{-2\pi i L\om}G(\om)$ and $|G(-\om)| = |G(\om)|$.

\item $L(\frac 2 \pi - \frac{1}{L}) \le G(0) \le L(\frac 2 \pi+ \frac{1}{L}).$

\item $
|G(\om)| \le\frac{2}{\pi}\frac{L}{|1-4L^2\om^2|} + \frac{8}{\pi L}$ for $\om \in [0,1/2]$.
\end{enumerate}
\end{lemma}

\begin{proof}
\begin{enumerate}
\item 
$$G(\om+n) =  \sum_{k=0}^L g(\frac k L) e^{2\pi i k(\om+n)}= \sum_{k=0}^L g(\frac k L) e^{2\pi i k\om} = G(\om).$$

\item 
\begin{align*}
G(-\om) & = \sum_{k=0}^L g(\frac k L) e^{-2\pi i k \om} 
 = \sum_{l=0}^L g\left(\frac{L-l}{L}\right) e^{-2\pi i(L-l)\om} \text{ by letting } l = L-k \\
 & = \sum_{l=0}^L \left[\cos\pi (1-\frac l L - 0.5)\right] e^{-2\pi i(L-l)\om} 
 = \sum_{l=0}^L \left[\cos\pi (\frac l L - 0.5)\right] e^{-2\pi i(L-l)\om}
 \\
 & = \sum_{l=0}^L g(\frac l L) e^{-2\pi i(L-l)\om} = e^{-2\pi i L\om} \sum_{l=0}^L g(\frac l L) e^{2\pi i l \om} 
 = e^{-2\pi i L\om} G(\om).
\end{align*}

\item 
On the one hand, $$\|g\|_1 = \int_{0}^{1}\cos \pi (t-0.5)dt = \frac 2 \pi.$$ 
On the other hand,
$$G(0) = \displaystyle \sum_{0}^{L} g(\frac k L),$$ and $$\frac{1}{L}(G(0) - 1)\le \|g\|_1 \le  \frac{1}{L}(G(0) + 1).$$

\item 
According to the Poisson summation formula,
\begin{align}
& \frac 1 L G(\om) = \frac{1}{L} \sum_{k= 0}^{L} g(\frac{k}{L})e^{2\pi i k \om}
 = \sum_{r = -\infty}^{\infty} \int_{0}^{1} g(z)e^{2\pi i L(\om-r)z}dz
 = \frac{2}{\pi}\sum_{r=-\infty}^{\infty} \frac{\cos{\pi L(\om-r)}}{1-4L^2(\om-r)^2} e^{i\pi L(\om-r)}
\label{eqpoisson}.
\end{align}
Hence
\begin{align*}
& \Big|\frac 1 L G(\om)\Big| \le \frac{2}{\pi}\sum_{r=-\infty}^{\infty}\Big| \frac{\cos{\pi L(\om-r)}}{1-4L^2(\om-r)^2}\Big| \\
& \le \frac{2}{\pi}\frac{1}{|1-4L^2\om^2|} + \frac{2}{\pi} \sum_{r \neq 0} \frac{1}{4L^2(r-\om)^2-1} \\
& \le \frac{2}{\pi}\frac{1}{|1-4L^2\om^2|} + \frac{2}{\pi} \sum_{r \neq 0} \frac{2}{4L^2(r-\om)^2} \\
 & \le \frac{2}{\pi}\frac{1}{|1-4L^2\om^2|} + \frac{4}{\pi} \Big[\frac{1}{4L^2(\frac 1 2)^2}+\frac{1}{4L^2(1)^2} + \frac{1}{4L^2(\frac 3 2)^2} + \frac{1}{4L^2(2)^2} +\ldots\Big] \\
 & \hspace{5cm}\text{ as } \om \in [0,\frac 1 2],
  \\
  & \le  \frac{2}{\pi}\frac{1}{|1-4L^2\om^2|} + \frac{4}{\pi}  \frac{1}{L^2}\Big[\frac{1}{1^2}+\frac{1}{2^2} + \frac{1}{3^2} + \frac{1}{4^2} +\ldots\Big]  \\
  & \le  \frac{2}{\pi}\frac{1}{|1-4L^2\om^2|} + \frac{4}{\pi} \frac{1}{L^2} 2 =  \frac{2}{\pi}\frac{1}{|1-4L^2\om^2|} + \frac{8}{\pi L^2}.\\
\end{align*}

In \eqref{eqpoisson} the difference between the discrete and the continuous case lies in $$\frac{2}{\pi}\sum_{r \neq 0} \frac{\cos{\pi L(\om-r)}}{1-4L^2(\om-r)^2}$$
which is bounded above by $8/(\pi L^2)$, and is therefore negligible when $L$ is sufficiently large.
 
\end{enumerate}
\end{proof}
The following lemma paves the way for the proof of Theorem \ref{thm3}.
\begin{lemma}
\label{lemma6}
Suppose objects in $\supp$ satisfy the separation condition 
\beq
\label{lemma60}
d(\om_j,\om_l) \ge q > \frac 1 L \sqrt{\frac 2 \pi}\left( \frac 2 \pi - \frac 1 L - \frac{8s}{\pi L^2}\right)^{-\frac 1 2}.
\eeq
Then
\beq
\Big(\frac{2}{\pi}-\frac 1 L - \frac{2}{\pi L^2 q^2} - \frac{8s}{\pi L^2}\Big)\|{{\bc}}\|_2^2
\le
\frac{1}{L}\sum_{k= 0}^{L} g(\frac{k}{L}) \Big|(\Phi^{0\rightarrow L}  {\bc})_k\Big|^2 
\le 
\Big(\frac{2}{\pi}+\frac 1 L + \frac{2}{\pi L^2 q^2} + \frac{8s}{\pi L^2}\Big)\|{\bc}\|_2^2\label{lemma61}
\eeq
for all ${\bc} \in \CC^s$.
\end{lemma}

\begin{proof}
\begin{align*}
&\sum_{k=0}^{L} g(\frac{k}{L}) \Big|(\Phi^{0\rightarrow L}  {\bc})_k\Big|^2  
= \sum_{k= 0}^{L} g(\frac{k}{L}) 
\overline{\sum_{j=1}^s {\bc}_j e^{-2\pi i k \om_j}}
\sum_{l=1}^s {\bc}_l e^{-2\pi i k \om_l}\\
& = \sum_{j=1}^s \sum_{l=1}^s \overline{{\bc}_j} {\bc}_l \sum_{k=0}^{L} g(\frac k L)e^{2\pi i k(\om_j -\om_l)} = \sum_{j=1}^s\sum_{l=1}^s G(\om_j-\om_l) \overline{{\bc}_j} {\bc}_l \\
& = G(0) \|{\bc}\|_2^2 + \sum_{j=1}^s \sum_{l\neq j} G(\om_j-\om_l)\overline{{\bc}_j}{\bc}_l.
\end{align*}
It follows from the triangle inequality that
\begin{align*}
 G(0) \|{\bc}\|_2^2 - \sum_{j=1}^s \sum_{l\neq j} |G(\om_j-\om_l)\overline{{\bc}_j}{\bc}_l|
\le
\sum_{k= 0}^{L} g(\frac{k}{L}) \Big|(\Phi^{0\rightarrow L }  {\bc})_k\Big|^2 \le G(0) \|{\bc}\|_2^2 + \sum_{j=1}^s \sum_{l\neq j} |G(\om_j-\om_l)\overline{{\bc}_j}{\bc}_l|, \\
\end{align*}
where $\displaystyle \sum_{j=1}^s \sum_{l\neq j} |G(\om_j-\om_l)\overline{{\bc}_j}{\bc}_l|$ can be estimated through Property 4 in Lemma \ref{lemma4}.
\begin{align}
&\sum_{j=1}^s \sum_{l\neq j} |G(\om_j-\om_l)\overline{{\bc}_j}{\bc}_l| 
 \le \sum_{j=1}^s \sum_{l\neq j} |G(\om_j-\om_l)|\frac{|{\bc}_j|^2+|{\bc}_l|^2}{2}  = \sum_{j=1}^s |{\bc}_j|^2 \sum_{l \neq j} |G(\om_j-\om_l)|
 \nonumber
\\
& = \sum_{j=1}^s |{\bc}_j|^2 \sum_{l \neq j} |G(d(\om_j,\om_l))|
\le \sum_{j=1}^s |{\bc}_j|^2 \sum_{l \neq j} \Big[\frac{2}{\pi} \frac{L}{|1-4L^2d^2(\om_j,\om_l)|} +\frac{8}{\pi L}\Big] 
\nonumber 
\\
& \le \sum_{j=1}^s |{\bc}_j|^2 \frac{4}{\pi} \sum_{n = 1}^{\lfloor \frac{s}{2}\rfloor} \Big[ \frac{L}{4 L^2 n^2 q^2 -1} +\frac{4}{ L} \Big]  \text{ as frequencies in } \supp \text{ are pairwise separated by } q >\frac{1}{L} 
\nonumber \\
& \le \sum_{j=1}^s |{\bc}_j|^2  \frac{4}{\pi} \Big[\frac{2 s}{ L} + \sum_{n=1}^\infty \frac{L}{4L^2 n^2 q^2 -1}\Big] 
\le \sum_{j=1}^s |{\bc}_j|^2  \frac{4}{\pi} \Big[\frac {2s}{ L} +  \frac{1}{L^2 q^2}\sum_{n=1}^\infty\frac{L}{4 n^2  -1}\Big]
\label{lemma62} \\
& \le \sum_{j=1}^s |{\bc}_j|^2  \frac{4}{\pi} \Big[\frac {2s}{ L} +  \frac{1}{L q^2} \frac{1}{2}\sum_{n=1}^\infty\Big(\frac{1}{2n-1}-\frac{1}{2n+1}\Big)\Big] 
=\sum_{j=1}^s |{\bc}_j|^2  \frac{4}{\pi} \Big[\frac {2s}{ L} +  \frac{1}{L q^2} \frac{1}{2}\Big] 
\nonumber\\
& = \|{\bc}\|_2^2  \frac{2}{\pi} \Big(\frac{1}{L q^2} + \frac{4s}{L}\Big),
\nonumber
\end{align}
where  $\lfloor {s\over 2} \rfloor$  denotes the nearest integer smaller than or equal to $s/2$. Kernel $g$ in \eqref{lemma61} is crucial for the convergence of the series in \eqref{lemma62}.

Therefore 
$$G(0) \|{\bc}\|_2^2 - \frac{2}{\pi} \Big(\frac{1}{L q^2} + \frac{4s}{L}\Big)\|{\bc}\|_2^2 
\le
\sum_{k= 0}^{L} g(\frac{k}{L}) \Big|(\Phi^{0\rightarrow L}  {\bc})_k\Big|^2 \le G(0) \|{\bc}\|_2^2 + \frac{2}{\pi} \Big(\frac{1}{L q^2} + \frac{4s}{L}\Big)\|{\bc}\|_2^2 . $$

The equation above along with Property 3 in Lemma \ref{lemma4} yields
$$
L\Big(\frac{2}{\pi}-\frac 1 L - \frac{2}{\pi L^2 q^2} - \frac{8s}{\pi L^2}\Big)\|{\bc}\|_2^2
\le
\sum_{k= 0}^{L} g(\frac{k}{L}) \Big|(\Phi^{0\rightarrow  L}  {\bc})_k\Big|^2 
\le 
L\Big(\frac{2}{\pi}+\frac 1 L + \frac{2}{\pi L^2 q^2} + \frac{8s}{\pi L^2}\Big)\|{\bc}\|_2^2. $$

The separation condition \eqref{lemma60} is derived from the positivity condition of the lower bound, i.e.,
$$
\frac{2}{\pi}-\frac 1 L - \frac{2}{\pi L^2 q^2} - \frac{8s}{\pi L^2} > 0.$$
\end{proof}

Proof of Theorem \ref{thm3} is given below.

\begin{proof}
Given that $\supp = \{\om_1,\ldots,\om_s\} \subset [0,1)$ and frequencies are separated above $1/L$, there are no more than $L$ frequencies in $\supp$, i.e., $s<L$.

The lower bound in Theorem \ref{thm3} follows from Lemma \ref{lemma6} as  
\begin{align*} 
&\|\Phi^{0\rightarrow L}{\bc}\|_2^2 
=
\sum_{k = 0}^{L} \Big|(\Phi^{0\rightarrow L}  {\bc})_k\Big|^2 
=
\sum_{k= 0}^{L}  \Big|(\Phi^{0\rightarrow L}  {\bc})_k\Big|^2
\ge \sum_{k= 0}^{L} g(\frac{k}{L}) \Big|(\Phi^{0\rightarrow L}  {\bc})_k\Big|^2 \\
&
\ge L\Big(\frac{2}{\pi}-\frac 1 L - \frac{2}{\pi L^2 q^2} - \frac{8s}{\pi L^2}\Big) >
L\Big(\frac{2}{\pi}- \frac{2}{\pi L^2 q^2} -\frac 4 L\Big).
\end{align*}
The separation condition \eqref{sep} in Theorem \ref{thm3} is derived from the positivity condition of the lower bound, i.e.,
$$\Big(\frac{2}{\pi}- \frac{2}{\pi L^2 q^2} -\frac 4 L\Big)>0.$$

We prove the upper bound in Theorem \ref{thm3} in two cases: $L$ is even or $L$ is odd.

\begin{description}
\item [Case 1: $L$ is even.]
First we substitute $L$ with $2L$ in \eqref{lemma61} and obtain
\beq
\label{lemma63}
\sum_{k= 0}^{2L} g(\frac{k}{2L}) \Big|(\Phi^{0\rightarrow 2L}  {\bc})_k\Big|^2 
\le 
2L\Big(\frac{2}{\pi} + \frac{1}{2L} + \frac{2}{4\pi L^2 q^2} + \frac{8s}{4\pi L^2}\Big)\|{\bc}\|_2^2.
\eeq

Let $D^{\frac L 2} = {\rm diag}(e^{-2\pi i \om_1 \frac L 2}, e^{-2\pi i \om_2 \frac L 2},\ldots,e^{-2\pi i \om_s \frac L 2})$ and $D^{-\frac L 2} = (D^{\frac L 2})^{-1}$.
On the one hand,
\begin{align*}
& \sum_{k= 0}^{2L} g(\frac{k}{2L}) \Big|(\Phi^{0\rightarrow 2L}  D^{-\frac L 2}{\bc})_k\Big|^2 
 \ge 
\sum_{k= L/2}^{3L/2} g(\frac{k}{2L}) \Big|(\Phi^{0\rightarrow 2L}  D^{-\frac L 2}{\bc})_k\Big|^2 
\ge
\sum_{k= L/2}^{3L/2} g(\frac 1 4) \Big|(\Phi^{0\rightarrow 2L}  D^{-\frac L 2}{\bc})_k\Big|^2 \\
& =
\frac{1}{\sqrt 2}\sum_{k= L/2}^{3L/2}  \Big|(\Phi^{0\rightarrow 2L}  D^{-\frac L 2}{\bc})_k\Big|^2 
=
\frac{1}{\sqrt 2} \sum_{k= 0}^{L}  \Big|(\Phi^{\frac L 2 \rightarrow \frac{3L}{2}} D^{-\frac L 2} {\bc})_k\Big|^2 
=
\frac{1}{\sqrt 2} \sum_{k= 0}^{L}  \Big|(\Phi^{ 0 \rightarrow L} D^{\frac L 2} D^{-\frac L 2} {\bc})_k\Big|^2 
\\
& =
\frac{1}{\sqrt 2} \sum_{k= 0}^{L}  \Big|(\Phi^{ 0 \rightarrow L}  {\bc})_k\Big|^2. 
\end{align*}

On the other hand, \eqref{lemma63} implies 
\begin{align*}
\sum_{k= 0}^{2L} g(\frac{k}{2L}) \Big|(\Phi^{0\rightarrow 2L}  D^{-\frac L 2}{\bc})_k\Big|^2 
& \le 
2L\Big(\frac{2}{\pi} + \frac{1}{2L} + \frac{2}{4\pi L^2 q^2} + \frac{8s}{4\pi L^2}\Big)\|D^{-\frac L 2}{\bc}\|_2^2
\\
& =L\Big(\frac{4}{\pi} + \frac{1}{L} + \frac{1}{\pi L^2 q^2} + \frac{4s}{\pi L^2}\Big)\|{\bc}\|_2^2.
\end{align*}

As a result
$$\|\Phi^L {\bc}\|_2^2
= \sum_{k= 0}^{L}  \Big|(\Phi^{ 0 \rightarrow L}  {\bc})_k\Big|^2
\le
L\Big(\frac{4\sqrt 2}{\pi} + \frac{\sqrt 2}{L} + \frac{\sqrt 2}{\pi L^2 q^2} + \frac{4\sqrt 2 s}{\pi L^2}\Big)\|{\bc}\|_2^2 
<
L\Big(\frac{4\sqrt 2}{\pi}  + \frac{\sqrt 2}{\pi L^2 q^2}+ \frac{3\sqrt 2}{L} \Big)\|{\bc}\|_2^2. 
$$
When $L$ is an even integer,
$$
\Big(\frac{2}{\pi}- \frac{2}{\pi L^2 q^2} -\frac 4 L\Big) \|\bc\|_2^2
\le
\frac{1}{L}\|\Phi^L {\bc}\|_2^2
\le
\Big(\frac{4\sqrt 2}{\pi}  + \frac{\sqrt 2}{\pi L^2 q^2}+ \frac{3\sqrt 2}{L} \Big)\|{\bc}\|_2^2. 
$$

\item [Case 2: $L$ is odd.]
\begin{align}
& \|\Phi^L {\bc}\|_2^2
= \sum_{k= 0}^{L}  \Big|(\Phi^{ 0 \rightarrow L}  {\bc})_k\Big|^2
\le
\sum_{k= 0}^{L+1}  \Big|(\Phi^{ 0 \rightarrow L+1}  {\bc})_k\Big|^2
\nonumber \\
& <
(L+1)\Big(\frac{4\sqrt 2}{\pi}  + \frac{\sqrt 2}{\pi (L+1)^2 q^2}+ \frac{3\sqrt 2}{L+1} \Big)\|{\bc}\|_2^2.
\label{lemma65} 
\end{align}

Eq. \eqref{lemma65} above follows from Case 1 as $L+1$ is an even integer. In summary, when $L$ is an odd integer,
$$
\left(\frac{2}{\pi}- \frac{2}{\pi L^2 q^2} -\frac 4 L\right) \|\bc\|_2^2
\le
\frac{1}{L}\|\Phi^L {\bc}\|_2^2
\le
\left(1+\frac 1 L\right)\left(\frac{4\sqrt 2}{\pi}  + \frac{\sqrt 2}{\pi (L+1)^2 q^2}+ \frac{3\sqrt 2}{L+1} \right)\|{\bc}\|_2^2. 
$$

\commentout{
Let $D^{\frac {L-1}{2}} = {\rm diag}(e^{-2\pi i \om_1 \frac {L-1}{2}}, e^{-2\pi i \om_2 \frac{L-1}{2}},\ldots,e^{-2\pi i \om_s \frac{L-1}{2}})$ and $D^{-\frac{L-1}{2}} = (D^{\frac{L-1}{2}})^{-1}$.

First we derive a lower bound of $g(\frac 1 4 - \frac{1}{4L})$.
\begin{align}
& g(\frac 1 4 - \frac{1}{4L}) = \cos(\frac \pi 4 +\frac{\pi}{4L}) = \cos(\frac \pi 4)\cos(\frac{\pi}{4L}) - \sin(\frac \pi 4)\sin(\frac{\pi}{4L}) \nonumber
\\
&\ge \frac{1}{\sqrt 2} \left( \cos(\frac{\pi}{4L}) - \sin(\frac{\pi}{4L}) \right) > \frac{1}{\sqrt 2} (1-\frac{1}{2L}-\frac{\pi}{4L}) >  
\frac{1}{\sqrt 2} (1-\frac{3}{2L}).
\label{lemma64}
\end{align}
The estimation is \eqref{lemma64} is based on $\cos(\frac{\pi}{4L}) > 1-\frac{1}{2L}$ and $\sin(\frac{\pi}{4L}) < \frac{\pi}{4L}$.

On the one hand,
\begin{align*}
& \sum_{k= 0}^{2L} g(\frac{k}{2L}) \Big|(\Phi^{0\rightarrow 2L}  D^{-\frac{L-1}{2}}{\bc})_k\Big|^2 
 \ge 
\sum_{k= (L-1)/2}^{(3L-1)/2} g(\frac{k}{2L}) \Big|(\Phi^{0\rightarrow 2L}  D^{-\frac{L-1}{2}}{\bc})_k\Big|^2 
\\
& \ge
\sum_{k= (L-1)/2}^{(3L-1)/2} g(\frac 1 4 - \frac{1}{4L}) \Big|(\Phi^{0\rightarrow 2L}  D^{-\frac{L-1}{2}}{\bc})_k\Big|^2 
>
\frac{1}{\sqrt 2}(1-\frac{3}{2L})\sum_{k= (L-1)/2}^{(3L-1)/2}  \Big|(\Phi^{0\rightarrow 2L}  D^{-\frac{L-1}{2}}{\bc})_k\Big|^2
\\ 
&
=
\frac{1}{\sqrt 2}(1-\frac{3}{2L})\sum_{k= 0}^{L}  \Big|(\Phi^{\frac{L-1}{2} \rightarrow \frac{3L-1}{2}} D^{-\frac{L-1}{2}} {\bc})_k\Big|^2 
=
\frac{1}{\sqrt 2}(1-\frac{3}{2L}) \sum_{k= 0}^{L}  \Big|(\Phi^{ 0 \rightarrow L} D^{\frac{L-1}{2}} D^{-\frac{L-1}{2}} {\bc})_k\Big|^2 
\\
& =
\frac{1}{\sqrt 2}(1-\frac{3}{2L}) \sum_{k= 0}^{L}  \Big|(\Phi^{ 0 \rightarrow L}  {\bc})_k\Big|^2. 
\end{align*}
}

\end{description}
\end{proof}

\end{appendices}
}

\end{document}